\providecommand{\U}[1]{\protect\rule{.1in}{.1in}}
\providecommand{\U}[1]{\protect\rule{.1in}{.1in}}
\theoremstyle{plain}
\newtheorem{lemma}{Lemma}
\numberwithin{equation}{section}
\begin{document}
\title[Nonlinear Schr\"{o}dinger Equation]{An Integral Form of the Nonlinear
Schr\"{o}dinger Equation with Variable Coefficients}
\author{Erwin Suazo}
\address{Department of Mathematics and Statistics, Arizona State University,
Tempe, AZ 85287--1804, U.S.A.}
\email{suazo@mathpost.la.asu.edu}
\author{Sergei K. Suslov}
\address{Department of Mathematics and Statistics, Arizona State University,
Tempe, AZ 85287--1804, U.S.A.}
\email{sks@asu.edu}
\urladdr{http://hahn.la.asu.edu/\symbol{126}suslov/index.html}
\date{\today }
\subjclass{Primary 81Q05, 35C05, 42A38}
\keywords{The Cauchy initial value problem, Riccati differential equation,
nonlinear Schr\"{o}dinger equation with variable coefficients}

\begin{abstract}
We discuss an integral form of the Cauchy initial value problem for the
nonlinear Schr\"{o}dinger equation with variable coefficients. Some special
and limiting cases are outlined.
\end{abstract}

\maketitle

\section{Introduction}

In the previous Letter we have constructed the time evolution operator for
the linear one-dimensional time-dependent Schr\"{o}dinger equation%
\begin{equation}
i\hslash \frac{\partial \psi }{\partial t}=H\left( t\right) \psi
\label{int1}
\end{equation}%
in a general case when the Hamiltonian is an arbitrary quadratic form of the
operator of coordinate and the operator of linear momentum; see \cite%
{Cor-Sot:Lop:Sua:Sus}. In this approach, several exactly solvable models,
that have been studied elsewhere, are classified in terms of elementary
solutions of certain characteristic equation. A particular solution of the
corresponding nonlinear Schr\"{o}dinger equation with variable coefficients
had been obtained in a similar fashion. In this paper we rewrite a nonlinear
Schr\"{o}dinger equation%
\begin{equation}
\left( i\frac{\partial }{\partial t}-H\left( t\right) \right) \psi \left(
x,t\right) =F\left( t,x,\psi \left( x,t\right) \right)  \label{int2}
\end{equation}%
in an integral form and consider several examples. In general, we do not
assume that the time-dependent linear Hamiltonian $H\left( t\right) $ here
is the Hermitian operator.

\section{A General Lemma: Duhamel's Principle}

The following result is helpful in study of solutions of the nonlinear Schr%
\"{o}dinger equation (\ref{int2}) by a fixed point argument.

\begin{lemma}
Suppose that the Cauchy initial value problem%
\begin{equation}
\left( i\frac{\partial }{\partial t}-H\left( t\right) \right) \psi
_{0}\left( x,t\right) =0,\qquad \left. \psi _{0}\left( x,t\right)
\right\vert _{t=0}=\varphi \left( x\right)  \label{lem1}
\end{equation}%
for a linear time-dependent Schr\"{o}dinger equation can be solved in terms
of the time evolution operator%
\begin{equation}
\psi _{0}\left( x,t\right) =U\left( t\right) \psi _{0}\left( x,0\right)
\label{lem2}
\end{equation}%
with%
\begin{equation}
\left( i\frac{\partial }{\partial t}-H\left( t\right) \right) U\left(
t\right) =0  \label{lem3}
\end{equation}%
and%
\begin{equation}
U\left( 0\right) =I=U\left( t\right) U\left( t\right) ^{-1}.  \label{lem4}
\end{equation}%
Then the initial value problem%
\begin{equation}
\left( i\frac{\partial }{\partial t}-H\left( t\right) \right) \psi =F\left(
t,x,\psi \right) ,\qquad \left. \psi \left( x,t\right) \right\vert
_{t=0}=\chi \left( x\right)  \label{lem5}
\end{equation}%
for the nonlinear Schr\"{o}dinger equation can be rewritten as an integral
equation%
\begin{equation}
\psi \left( x,t\right) =U\left( t\right) \psi \left( x,0\right)
-i\int_{0}^{t}U\left( t\right) U\left( s\right) ^{-1}F\left( s,x,\psi \left(
x,s\right) \right) \ ds  \label{lem6}
\end{equation}%
in terms of the time evolution operator $U\left( t\right) $ for the linear
equation and its inverse.
\end{lemma}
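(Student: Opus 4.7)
The plan is to apply variation of parameters (Duhamel's method): I write the unknown solution as $\psi(x,t) = U(t)\,v(x,t)$, where $v(x,t)$ is a new unknown that absorbs the effect of the forcing term $F$. Applying $i\partial_t - H(t)$ and using the product rule yields
\[ (i\partial_t - H(t))(U(t)v) = [(i\partial_t - H(t))U(t)]v + iU(t)\,\partial_t v = iU(t)\,\partial_t v, \]
since the first bracket vanishes by hypothesis (\ref{lem3}). Substituting into (\ref{lem5}) therefore reduces the PDE to $iU(t)\,\partial_t v = F(t,x,\psi)$, that is, $\partial_t v = -iU(t)^{-1}F(t,x,\psi(x,t))$. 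Integrating from $0$ to $t$, using the initial condition $v(x,0) = U(0)^{-1}\psi(x,0) = \chi(x)$ from (\ref{lem4}), and multiplying through by $U(t)$ produces exactly (\ref{lem6}).

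For the converse direction I would differentiate (\ref{lem6}) directly in $t$. The homogeneous part $U(t)\chi(x)$ is annihilated by $i\partial_t - H(t)$ by (\ref{lem3}). For the integral, the Leibniz rule contributes an interior term in which $i\partial_t - H(t)$ acts on $U(t)U(s)^{-1}F(s,x,\psi(x,s))$ and vanishes again by (\ref{lem3}) (since $U(s)^{-1}F(s,\cdot)$ is $t$-independent), plus a boundary term at $s = t$ equal to $-i\,U(t)U(t)^{-1}F(t,x,\psi) = -iF$; the prefactor $i$ from $i\partial_t$ turns this into $+F$, as required. Evaluating (\ref{lem6}) at $t = 0$ yields $\psi(x,0) = U(0)\chi(x) = \chi(x)$ via (\ref{lem4}).

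The only subtle step is the Leibniz differentiation of the variable-upper-limit integral, where the identity $U(t)U(t)^{-1} = I$ from (\ref{lem4}) is what recovers the source $F$ from the boundary contribution. This is precisely where the assumed invertibility of $U(t)$ is indispensable; for the non-Hermitian Hamiltonians contemplated in this paper, it is the condition that must be verified most carefully.
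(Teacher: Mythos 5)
Your proposal is correct, and its second half coincides with the paper's own argument: the authors prove the lemma exactly by differentiating (\ref{lem6}), using the Leibniz rule so that the boundary term at $s=t$ produces $F$ through $U\left( t\right) U\left( t\right) ^{-1}=I$ of (\ref{lem4}), while the interior term is killed by (\ref{lem3}) acting on the $t$-independent function $U\left( s\right) ^{-1}F\left( s,x,\psi \left( x,s\right) \right)$, and then checking the initial condition (they do this slightly more carefully than you, by showing the integral term tends to zero as $t\rightarrow 0^{+}$ rather than merely setting $t=0$). What you add, and the paper omits, is the forward direction: the variation-of-parameters substitution $\psi =U\left( t\right) v$, which reduces (\ref{lem5}) to $\partial _{t}v=-iU\left( t\right) ^{-1}F$ and, upon integration, \emph{derives} (\ref{lem6}) rather than merely verifying it. This makes your treatment a genuine equivalence (solution of the Cauchy problem $\Longleftrightarrow$ solution of the integral equation), which is arguably what ``can be rewritten as'' should mean, whereas the paper only establishes that a solution of the integral equation solves (\ref{lem5}); the price is that your derivation uses the invertibility of $U\left( t\right)$ for all $t$, not just the identity $U\left( t\right) U\left( t\right) ^{-1}=I$ at the boundary term, and like the paper's computation it remains formal (interchange of $\partial _{t}$ with the integral and with the operator $U\left( t\right)$ is not justified), which is acceptable here since the paper itself works at that level of rigor.
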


\begin{proof}
Indeed, from (\ref{lem6})%
\begin{eqnarray*}
&&\left( i\frac{\partial }{\partial t}-H\left( t\right) \right) \psi =\left(
i\frac{\partial }{\partial t}-H\left( t\right) \right) U\left( t\right) \psi
\left( x,0\right) \\
&&\quad +\frac{\partial }{\partial t}\int_{0}^{t}U\left( t\right) U\left(
s\right) ^{-1}F\left( s,x,\psi \left( x,s\right) \right) \ ds \\
&&\quad \quad +i\int_{0}^{t}H\left( t\right) U\left( t\right) U\left(
s\right) ^{-1}F\left( s,x,\psi \left( x,s\right) \right) \ ds,
\end{eqnarray*}%
where%
\begin{eqnarray*}
&&\frac{\partial }{\partial t}\int_{0}^{t}U\left( t\right) U\left( s\right)
^{-1}F\left( s,x,\psi \left( x,s\right) \right) \ ds \\
&&\quad =U\left( t\right) U\left( t\right) ^{-1}F\left( t,x,\psi \left(
x,t\right) \right) \\
&&\qquad +\int_{0}^{t}\frac{\partial U\left( t\right) }{\partial t}U\left(
s\right) ^{-1}F\left( s,x,\psi \left( x,s\right) \right) \ ds.
\end{eqnarray*}%
Thus, by (\ref{lem3}) and (\ref{lem4})%
\begin{eqnarray*}
&&\left( i\frac{\partial }{\partial t}-H\left( t\right) \right) \psi
=F\left( t,x,\psi \left( x,t\right) \right) \\
&&\quad -i\int_{0}^{t}\left( i\frac{\partial }{\partial t}-H\left( t\right)
\right) U\left( t\right) \left( U\left( s\right) ^{-1}F\left( s,x,\psi
\left( x,s\right) \right) \right) \ ds \\
&&\qquad =F\left( t,x,\psi \left( x,t\right) \right) .
\end{eqnarray*}%
Initial conditions are satisfied in view of%
\begin{eqnarray*}
&&\lim_{t\rightarrow 0^{+}}\int_{0}^{t}U\left( t\right) U\left( s\right)
^{-1}F\left( s,x,\psi \left( x,s\right) \right) \ ds \\
&&\quad =U\left( 0\right) \lim_{t\rightarrow 0^{+}}\int_{0}^{t}U\left(
s\right) ^{-1}F\left( s,x,\psi \left( x,s\right) \right) \ ds=0,
\end{eqnarray*}%
where we have used the fact that if $\lim_{t\rightarrow 0^{+}}f\left(
t\right) =f\left( 0\right) ,$ then%
\begin{equation*}
\lim_{t\rightarrow 0^{+}}\int_{0}^{t}f\left( s\right) \ ds=0.
\end{equation*}%
This completes the proof.
\end{proof}

When $F$ does not depend on $\psi ,$ expression (\ref{lem6}) solves the
Cauchy initial value problem (\ref{lem5}) for the corresponding
nonhomogeneous Schr\"{o}dinger equation.

The integral equation (\ref{lem6}) can also be rewritten in the form%
\begin{eqnarray}
\psi \left( x,0\right) &=&U^{-1}\left( t\right) \psi \left( x,t\right)
+i\int_{0}^{t}U\left( s\right) ^{-1}F\left( s,x,\psi \left( x,s\right)
\right) \ ds  \label{lem7} \\
&=&T\left( t\right) ^{-1}\psi \left( x,t\right) ,  \notag
\end{eqnarray}%
which gives explicitly the inverse of the time evolution operator%
\begin{equation}
\psi \left( x,t\right) =T\left( t\right) \psi \left( x,0\right)  \label{lem8}
\end{equation}%
for the nonlinear Schr\"{o}dinger equation (\ref{int2}). Thus, in general,
solution of the Cauchy initial value problem (\ref{lem5}) is a problem of
inversion of this nonlinear integral operator.

\section{Quadratic Hamiltonians}

The fundamental solution of the linear Schr\"{o}dinger equation with the
quadratic Hamiltonian of the form%
\begin{equation}
i\frac{\partial \psi }{\partial t}=-a\left( t\right) \frac{\partial ^{2}\psi 
}{\partial x^{2}}+b\left( t\right) x^{2}\psi -i\left( c\left( t\right) x%
\frac{\partial \psi }{\partial x}+d\left( t\right) \psi \right) -f\left(
t\right) x\psi +ig\left( t\right) \frac{\partial \psi }{\partial x},
\label{schr1}
\end{equation}%
where $a\left( t\right) ,$ $b\left( t\right) ,$ $c\left( t\right) ,$ $%
d\left( t\right) ,$ $f\left( t\right) ,$ and $g\left( t\right) $ are given
real-valued functions of time $t$ only, can be found with the help of a
familiar substitution%
\begin{equation}
\psi =Ae^{iS}=A\left( t\right) e^{iS\left( x,y,t\right) },\qquad A=A\left(
t\right) =\frac{1}{\sqrt{2\pi i\mu \left( t\right) }}  \label{schr2}
\end{equation}%
with%
\begin{equation}
S=S\left( x,y,t\right) =\alpha \left( t\right) x^{2}+\beta \left( t\right)
xy+\gamma \left( t\right) y^{2}+\delta \left( t\right) x+\varepsilon \left(
t\right) y+\kappa \left( t\right) ,  \label{schr4}
\end{equation}%
where $\alpha \left( t\right) ,$ $\beta \left( t\right) ,$ $\gamma \left(
t\right) ,$ $\delta \left( t\right) ,$ $\varepsilon \left( t\right) ,$ and $%
\kappa \left( t\right) $ are differentiable real-valued functions of time $t$
only; see \cite{Cor-Sot:Lop:Sua:Sus}. Indeed,%
\begin{equation}
\frac{\partial S}{\partial t}=-a\left( \frac{\partial S}{\partial x}\right)
^{2}-bx^{2}+fx+\left( g-cx\right) \frac{\partial S}{\partial x}
\label{schr5}
\end{equation}%
provided%
\begin{equation}
\frac{\mu ^{\prime }}{2\mu }=a\frac{\partial ^{2}S}{\partial x^{2}}%
+d=2\alpha \left( t\right) a\left( t\right) +d\left( t\right) .
\label{schr6}
\end{equation}%
Equating the coefficients of all admissible powers of $x^{m}y^{n}$ with $%
0\leq m+n\leq 2,$ gives the following system of ordinary differential
equations%
\begin{align}
& \frac{d\alpha }{dt}+b\left( t\right) +2c\left( t\right) \alpha +4a\left(
t\right) \alpha ^{2}=0,  \label{schr7} \\
& \frac{d\beta }{dt}+\left( c\left( t\right) +4a\left( t\right) \alpha
\left( t\right) \right) \beta =0,  \label{schr8} \\
& \frac{d\gamma }{dt}+a\left( t\right) \beta ^{2}\left( t\right) =0,
\label{schr9} \\
& \frac{d\delta }{dt}+\left( c\left( t\right) +4a\left( t\right) \alpha
\left( t\right) \right) \delta =f\left( t\right) +2\alpha \left( t\right)
g\left( t\right) ,  \label{schr10} \\
& \frac{d\varepsilon }{dt}=\left( g\left( t\right) -2a\left( t\right) \delta
\left( t\right) \right) \beta \left( t\right) ,  \label{schr11} \\
& \frac{d\kappa }{dt}=g\left( t\right) \delta \left( t\right) -a\left(
t\right) \delta ^{2}\left( t\right) ,  \label{schr12}
\end{align}%
where the first equation is the familiar Riccati nonlinear differential
equation; see, for example, \cite{Haah:Stein}, \cite{Rainville}, \cite{Wa}
and references therein. Substitution of (\ref{schr6}) into (\ref{schr7})
results in the second order linear equation%
\begin{equation}
\mu ^{\prime \prime }-\tau \left( t\right) \mu ^{\prime }+4\sigma \left(
t\right) \mu =0  \label{schr13}
\end{equation}%
with%
\begin{equation}
\tau \left( t\right) =\frac{a^{\prime }}{a}-2c+4d,\qquad \sigma \left(
t\right) =ab-cd+d^{2}+\frac{d}{2}\left( \frac{a^{\prime }}{a}-\frac{%
d^{\prime }}{d}\right) ,  \label{schr13a}
\end{equation}%
which must be solved subject to the initial data%
\begin{equation}
\mu \left( 0\right) =0,\qquad \mu ^{\prime }\left( 0\right) =2a\left(
0\right) \neq 0  \label{schr13b}
\end{equation}%
in order to satisfy the initial condition for the corresponding Green
function; see the asymptotic formula (\ref{schr21}) below. We refer to
equation (\ref{schr13}) as the \textit{characteristic equation} and its
solution $\mu \left( t\right) ,$ subject to (\ref{schr13b}), as the \textit{%
characteristic function.} As the special case (\ref{schr13}) contains the
generalized equation of hypergeometric type, whose solutions are studied in
detail in \cite{Ni:Uv}; see also \cite{An:As:Ro}, \cite{Ni:Su:Uv}, \cite%
{Sus:Trey}, and \cite{Wa}.

Thus, the Green function (fundamental solution or propagator) is given in
terms of the characteristic function%
\begin{equation}
\psi =G\left( x,y,t\right) =\frac{1}{\sqrt{2\pi i\mu \left( t\right) }}\
e^{i\left( \alpha \left( t\right) x^{2}+\beta \left( t\right) xy+\gamma
\left( t\right) y^{2}+\delta \left( t\right) x+\varepsilon \left( t\right)
y+\kappa \left( t\right) \right) }.  \label{schr14}
\end{equation}%
Here%
\begin{equation}
\alpha \left( t\right) =\frac{1}{4a\left( t\right) }\frac{\mu ^{\prime
}\left( t\right) }{\mu \left( t\right) }-\frac{d\left( t\right) }{2a\left(
t\right) },  \label{schr15}
\end{equation}%
\begin{equation}
\beta \left( t\right) =-\frac{1}{\mu \left( t\right) }\ \exp \left(
-\int_{0}^{t}\left( c\left( \tau \right) -2d\left( \tau \right) \right) \
d\tau \right) ,  \label{scr16}
\end{equation}%
\begin{eqnarray}
\gamma \left( t\right) &=&\frac{a\left( t\right) }{\mu \left( t\right) \mu
^{\prime }\left( t\right) }\ \exp \left( -2\int_{0}^{t}\left( c\left( \tau
\right) -2d\left( \tau \right) \right) \ d\tau \right)  \label{schr17} \\
&&\quad -4\int_{0}^{t}\frac{a\left( \tau \right) \sigma \left( \tau \right) 
}{\left( \mu ^{\prime }\left( \tau \right) \right) ^{2}}\left( \exp \left(
-2\int_{0}^{\tau }\left( c\left( \lambda \right) -2d\left( \lambda \right)
\right) \ d\lambda \right) \right) \ d\tau ,  \notag
\end{eqnarray}%
\begin{eqnarray}
\delta \left( t\right) &=&\frac{1}{\mu \left( t\right) }\ \exp \left(
-\int_{0}^{t}\left( c\left( \tau \right) -2d\left( \tau \right) \right) \
d\tau \right) \   \label{schr18} \\
&&\quad \times \int_{0}^{t}\exp \left( \int_{0}^{\tau }\left( c\left(
\lambda \right) -2d\left( \lambda \right) \right) \ d\lambda \right)  \notag
\\
&&\quad \quad \times \left( \left( f\left( \tau \right) -\frac{d\left( \tau
\right) }{a\left( \tau \right) }g\left( \tau \right) \right) \mu \left( \tau
\right) +\frac{g\left( \tau \right) }{2a\left( \tau \right) }\mu ^{\prime
}\left( \tau \right) \right) \ d\tau ,  \notag
\end{eqnarray}%
\begin{eqnarray}
\varepsilon \left( t\right) &=&-\frac{2a\left( t\right) }{\mu ^{\prime
}\left( t\right) }\delta \left( t\right) \ \exp \left( -\int_{0}^{t}\left(
c\left( \tau \right) -2d\left( \tau \right) \right) \ d\tau \right)
\label{schr18a} \\
&&\quad +8\int_{0}^{t}\frac{a\left( \tau \right) \sigma \left( \tau \right) 
}{\left( \mu ^{\prime }\left( \tau \right) \right) ^{2}}\exp \left(
-\int_{0}^{\tau }\left( c\left( \lambda \right) -2d\left( \lambda \right)
\right) \ d\lambda \right) \left( \mu \left( \tau \right) \delta \left( \tau
\right) \right) \ d\tau  \notag \\
&&\quad +2\int_{0}^{t}\frac{a\left( \tau \right) }{\mu ^{\prime }\left( \tau
\right) }\exp \left( -\int_{0}^{\tau }\left( c\left( \lambda \right)
-2d\left( \lambda \right) \right) \ d\lambda \right) \left( f\left( \tau
\right) -\frac{d\left( \tau \right) }{a\left( \tau \right) }g\left( \tau
\right) \right) \ d\tau ,  \notag
\end{eqnarray}%
\begin{eqnarray}
\kappa \left( t\right) &=&\frac{a\left( t\right) \mu \left( t\right) }{\mu
^{\prime }\left( t\right) }\delta ^{2}\left( t\right) -4\int_{0}^{t}\frac{%
a\left( \tau \right) \sigma \left( \tau \right) }{\left( \mu ^{\prime
}\left( \tau \right) \right) ^{2}}\left( \mu \left( \tau \right) \delta
\left( \tau \right) \right) ^{2}\ d\tau  \label{schr19} \\
&&\quad -2\int_{0}^{t}\frac{a\left( \tau \right) }{\mu ^{\prime }\left( \tau
\right) }\left( \mu \left( \tau \right) \delta \left( \tau \right) \right)
\left( f\left( \tau \right) -\frac{d\left( \tau \right) }{a\left( \tau
\right) }g\left( \tau \right) \right) \ d\tau  \notag
\end{eqnarray}%
with $\delta \left( 0\right) =g\left( 0\right) /\left( 2a\left( 0\right)
\right) ,$ $\varepsilon \left( 0\right) =-\delta \left( 0\right) ,$ and $%
\kappa \left( 0\right) =0.$ Integration by parts has been used to resolve
the singularities of the initial data. Then the corresponding asymptotic
formula is%
\begin{equation}
G\left( x,y,t\right) =\frac{e^{iS\left( x,y,t\right) }}{\sqrt{2\pi i\mu
\left( t\right) }}\sim \frac{1}{\sqrt{4\pi ia\left( 0\right) t}}\exp \left( i%
\frac{\left( x-y\right) ^{2}}{4a\left( 0\right) t}\right) \exp \left( i\frac{%
g\left( 0\right) }{2a\left( 0\right) }\left( x-y\right) \right)
\label{schr21}
\end{equation}%
as $t\rightarrow 0^{+}.$ Notice that the first term on the right hand side
is a familiar free particle propagator (cf.~(\ref{sp1}) below).

By the superposition principle, an explicit solution of the Cauchy initial
value problem%
\begin{equation}
i\frac{\partial \psi }{\partial t}=H\left( t\right) \psi ,\qquad \left. \psi
\left( x,t\right) \right\vert _{t=0}=\varphi \left( x\right)  \label{schr22}
\end{equation}%
on the infinite interval $-\infty <x<\infty $ with the general quadratic
Hamiltonian as in (\ref{schr1}) has the form%
\begin{equation}
\psi \left( x,t\right) =\int_{-\infty }^{\infty }G\left( x,y,t\right) \ \psi
\left( y,0\right) \ dy.  \label{schr23}
\end{equation}%
This yields the time evolution operator (\ref{lem2}) explicitly as an
integral operator.

\section{Inverses of the Time Evolution Operator}

In the previous section we have discussed how to construct the time
evolution operator for the linear Schr\"{o}dinger equation with the
quadratic Hamiltonian (\ref{schr1}). In this section we study the inverses%
\begin{eqnarray}
\psi \left( x,t\right) &=&U\left( t\right) \psi \left( x,0\right)
=\int_{-\infty }^{\infty }G\left( x,y,t\right) \ \psi \left( y,0\right) \ dy,
\label{inv1} \\
\psi \left( x,0\right) &=&U^{-1}\left( t\right) \psi \left( x,t\right)
=\int_{-\infty }^{\infty }H\left( x,y,t\right) \ \psi \left( y,t\right) \ dy
\label{inv2}
\end{eqnarray}%
such that%
\begin{equation}
U\left( t\right) U^{-1}\left( t\right) =U^{-1}\left( t\right) U\left(
t\right) =I=\text{id}.  \label{inv3}
\end{equation}%
Here we introduce%
\begin{eqnarray}
G\left( x,y,t\right) &=&\frac{e^{iS\left( x,y,t\right) }}{\sqrt{2\pi i\mu
\left( t\right) }},  \label{inv4} \\
H\left( x,y,t\right) &=&G^{\ast }\left( y,x,t\right) \exp \left(
-\int_{0}^{t}\left( c\left( \tau \right) -2d\left( \tau \right) \right) \
d\tau \right)  \label{inv5} \\
&=&\frac{e^{-iS\left( y,x,t\right) }}{\sqrt{-2\pi i\mu \left( t\right) }}%
\exp \left( -\int_{0}^{t}\left( c\left( \tau \right) -2d\left( \tau \right)
\right) \ d\tau \right)  \notag
\end{eqnarray}%
with $S\left( x,y,t\right) =\alpha \left( t\right) x^{2}+\beta \left(
t\right) xy+\gamma \left( t\right) y^{2}+\delta \left( t\right)
x+\varepsilon \left( t\right) y+\kappa \left( t\right) .$

First we formally prove the following orthogonality relations of the kernels%
\begin{eqnarray}
\int_{-\infty }^{\infty }G\left( x,y,t\right) H\left( y,z,t\right) \ dy
&=&e^{i\left( \alpha \left( t\right) \left( x+z\right) +\delta \left(
t\right) \right) \left( x-z\right) }\ \delta \left( x-z\right) ,
\label{inv6} \\
\int_{-\infty }^{\infty }H\left( x,y,t\right) G\left( y,z,t\right) \ dy
&=&e^{-i\left( \gamma \left( t\right) \left( x+z\right) +\varepsilon \left(
t\right) \right) \left( x-z\right) }\ \delta \left( x-z\right) ,
\label{inv7}
\end{eqnarray}%
where $\delta \left( x\right) $ is the Dirac delta function with respect to
the space coordinates (do not confuse with a given function of time $\delta
\left( t\right) $ throughout the paper).

Indeed, by (\ref{inv4})--(\ref{inv5}) one gets%
\begin{eqnarray*}
\int_{-\infty }^{\infty }G\left( x,y,t\right) H\left( y,z,t\right) \ dy &=&%
\frac{1}{\mu \left( t\right) }\exp \left( -\int_{0}^{t}\left( c\left( \tau
\right) -2d\left( \tau \right) \right) \ d\tau \right) \\
&&\times e^{i\left( \alpha \left( t\right) \left( x+z\right) +\delta \left(
t\right) \right) \left( x-z\right) }\ \frac{1}{2\pi }\int_{-\infty }^{\infty
}e^{i\beta \left( t\right) \left( x-z\right) y}\ dy \\
&=&e^{i\left( \alpha \left( t\right) \left( x+z\right) +\delta \left(
t\right) \right) \left( x-z\right) }\ \delta \left( x-z\right)
\end{eqnarray*}%
in view of (\ref{scr16}) with $-\beta \left( t\right) >0$ and the integral%
\begin{equation}
\delta \left( \zeta \right) =\frac{1}{2\pi }\int_{-\infty }^{\infty
}e^{i\zeta \xi }\ d\xi  \label{dirac}
\end{equation}%
as the Dirac delta function. The formal proof of (\ref{inv7}) is similar and
is left to the reader.

Now we have%
\begin{eqnarray*}
&&U^{-1}\left( t\right) U\left( t\right) \psi \left( x,0\right)
=U^{-1}\left( t\right) \psi \left( x,t\right) \\
&&\quad =\int_{-\infty }^{\infty }H\left( x,y,t\right) \ \psi \left(
y,t\right) \ dy \\
&&\quad =\int_{-\infty }^{\infty }H\left( x,y,t\right) \ \left(
\int_{-\infty }^{\infty }G\left( y,z,t\right) \ \psi \left( z,0\right) \
dz\right) \ dy \\
&&\quad =\int_{-\infty }^{\infty }\left( \int_{-\infty }^{\infty }H\left(
x,y,t\right) G\left( y,z,t\right) \ dy\right) \ \psi \left( z,0\right) \ dz
\\
&&\quad =\int_{-\infty }^{\infty }e^{-i\left( \gamma \left( t\right) \left(
x+z\right) +\varepsilon \left( t\right) \right) \left( x-z\right) }\ \delta
\left( x-z\right) \ \psi \left( z,0\right) \ dz \\
&&\quad =\psi \left( x,0\right) ,
\end{eqnarray*}%
or $U^{-1}\left( t\right) U\left( t\right) =I.$ A formal proof of the second
relation $U\left( t\right) U^{-1}\left( t\right) =I$ is similar and left to
the reader.

An integral form (\ref{lem6}) of the nonlinear Schr\"{o}dinger equation (\ref%
{int2}) contains also the integral operator $U\left( t,s\right) =U\left(
t\right) U^{-1}\left( s\right) :$%
\begin{equation}
U\left( t\right) U^{-1}\left( s\right) \psi \left( x,s\right) =\int_{-\infty
}^{\infty }G\left( x,y,t,s\right) \psi \left( y,s\right) \ dy  \label{inv8a}
\end{equation}%
with the kernel given by%
\begin{equation}
G\left( x,y,t,s\right) =\int_{-\infty }^{\infty }G\left( x,z,t\right)
H\left( z,y,s\right) \ dz.  \label{inv8}
\end{equation}%
Here%
\begin{eqnarray}
&&\int_{-\infty }^{\infty }G\left( x,z,t\right) H\left( z,y,s\right) \ dz=%
\frac{1}{2\pi \sqrt{\mu \left( t\right) \mu \left( s\right) }}  \label{inv9}
\\
&&\qquad \times \exp \left( -\int_{0}^{s}\left( c\left( \tau \right)
-2d\left( \tau \right) \right) \ d\tau \right)  \notag \\
&&\qquad \times e^{i\left( \alpha \left( t\right) x^{2}-\alpha \left(
s\right) y^{2}+\delta \left( t\right) x-\delta \left( s\right) y+\kappa
\left( t\right) -\kappa \left( s\right) \right) }  \notag \\
&&\qquad \times \int_{-\infty }^{\infty }e^{i\left( \left( \gamma \left(
t\right) -\gamma \left( s\right) \right) z^{2}+\left( \beta \left( t\right)
x-\beta \left( s\right) y+\varepsilon \left( t\right) -\varepsilon \left(
s\right) \right) z\right) }\ dz  \notag
\end{eqnarray}%
and with the help of the familiar elementary integral%
\begin{equation}
\int_{-\infty }^{\infty }e^{i\left( az^{2}+2bz\right) }\,dz=\sqrt{\frac{\pi i%
}{a}}\,e^{-ib^{2}/a},  \label{gauss}
\end{equation}%
see Refs.~\cite{Bo:Shi} and \cite{Palio:Mead}, we get%
\begin{eqnarray}
&&G\left( x,y,t,s\right) =\frac{1}{\sqrt{4\pi i\mu \left( t\right) \mu
\left( s\right) \left( \gamma \left( s\right) -\gamma \left( t\right)
\right) }}  \label{inv9a} \\
&&\qquad \times \exp \left( -\int_{0}^{s}\left( c\left( \tau \right)
-2d\left( \tau \right) \right) \ d\tau \right)  \notag \\
&&\qquad \times \exp \left( i\left( \alpha \left( t\right) x^{2}-\alpha
\left( s\right) y^{2}+\delta \left( t\right) x-\delta \left( s\right)
y+\kappa \left( t\right) -\kappa \left( s\right) \right) \right)  \notag \\
&&\qquad \times \exp \left( \frac{\left( \beta \left( t\right) x-\beta
\left( s\right) y+\varepsilon \left( t\right) -\varepsilon \left( s\right)
\right) ^{2}}{4i\left( \gamma \left( t\right) -\gamma \left( s\right)
\right) }\right) .  \notag
\end{eqnarray}%
This can be transform into a somewhat more convenient form%
\begin{eqnarray}
&&G\left( x,y,t,s\right) =\frac{1}{\sqrt{4\pi i\mu \left( t\right) \mu
\left( s\right) \left( \gamma \left( s\right) -\gamma \left( t\right)
\right) }}\exp \left( -\int_{0}^{s}\left( c\left( \tau \right) -2d\left(
\tau \right) \right) \ d\tau \right)  \label{inv9b} \\
&&\quad \times \exp \left( \frac{\left( \varepsilon \left( t\right)
-\varepsilon \left( s\right) \right) ^{2}-4\left( \gamma \left( t\right)
-\gamma \left( s\right) \right) \left( \kappa \left( t\right) -\kappa \left(
s\right) \right) }{4i\left( \gamma \left( t\right) -\gamma \left( s\right)
\right) }\right)  \notag \\
&&\quad \times \exp \left( \frac{\left( \varepsilon \left( t\right)
-\varepsilon \left( s\right) \right) \left( \beta \left( t\right) x-\beta
\left( s\right) y\right) -2\left( \gamma \left( t\right) -\gamma \left(
s\right) \right) \left( \delta \left( t\right) x-\delta \left( s\right)
y\right) }{2i\left( \gamma \left( t\right) -\gamma \left( s\right) \right) }%
\right)  \notag \\
&&\quad \times \exp \left( \frac{\left( \beta \left( t\right) x-\beta \left(
s\right) y\right) ^{2}-4\left( \gamma \left( t\right) -\gamma \left(
s\right) \right) \left( \alpha \left( t\right) x^{2}-\alpha \left( s\right)
y^{2}\right) }{4i\left( \gamma \left( t\right) -\gamma \left( s\right)
\right) }\right) .  \notag
\end{eqnarray}%
In the limit $s\rightarrow t$ with $s<t$ one arrives at the kernel of the
identity operator. We leave the details to the reader.

\section{Axillary Tools: An Estimate and a Functional Equation}

Consider the operator $U\left( t,s\right) =U\left( t\right) U^{-1}\left(
s\right) .$ From (\ref{inv8a}) and (\ref{inv9b}) one gets%
\begin{eqnarray*}
\left\vert U\left( t,s\right) \psi \left( x,s\right) \right\vert
&=&\left\vert \int_{-\infty }^{\infty }G\left( x,y,t,s\right) \psi \left(
y,s\right) \ dy\right\vert \\
&\leq &\int_{-\infty }^{\infty }\left\vert G\left( x,y,t,s\right) \psi
\left( y,s\right) \right\vert \ dy \\
&=&\frac{1}{\sqrt{4\pi \left\vert \mu \left( t\right) \mu \left( s\right)
\left( \gamma \left( s\right) -\gamma \left( t\right) \right) \right\vert }}
\\
&&\times \exp \left( -\int_{0}^{s}\left( c\left( \tau \right) -2d\left( \tau
\right) \right) \ d\tau \right) \\
&&\times \int_{-\infty }^{\infty }\left\vert \psi \left( y,s\right)
\right\vert \ dy
\end{eqnarray*}%
and as a result%
\begin{eqnarray}
\left\vert U\left( t,s\right) \psi \left( x,s\right) \right\vert &\leq &%
\frac{1}{\sqrt{4\pi \left\vert \mu \left( t\right) \mu \left( s\right)
\left( \gamma \left( s\right) -\gamma \left( t\right) \right) \right\vert }}
\label{at1} \\
&&\times \exp \left( -\int_{0}^{s}\left( c\left( \tau \right) -2d\left( \tau
\right) \right) \ d\tau \right)  \notag \\
&&\quad \times \int_{-\infty }^{\infty }\left\vert \psi \left( y,s\right)
\right\vert \ dy.  \notag
\end{eqnarray}%
Thus the familiar estimate%
\begin{eqnarray}
\left\Vert U\left( t,s\right) \psi \right\Vert _{\infty } &\leq &\frac{1}{%
\sqrt{4\pi \left\vert \mu \left( t\right) \mu \left( s\right) \left( \gamma
\left( s\right) -\gamma \left( t\right) \right) \right\vert }}  \label{at2}
\\
&&\times \exp \left( -\int_{0}^{s}\left( c\left( \tau \right) -2d\left( \tau
\right) \right) \ d\tau \right) \ \left\Vert \psi \right\Vert _{1}  \notag
\end{eqnarray}%
holds in the case of the general quadratic Hamiltonian (\ref{schr1}) (cf.~%
\cite{Keel:Tao}).

As we shall see in the next section, some solutions of the characteristic
equation (\ref{schr13}) obey the following property%
\begin{equation}
\mu \left( t\right) \mu \left( s\right) \left( \gamma \left( s\right)
-\gamma \left( t\right) \right) =\chi \left( \frac{t+s}{2}\right) \mu \left(
t-s\right) .  \label{at3}
\end{equation}%
Then%
\begin{equation*}
\mu \left( t\right) \mu \left( s\right) \frac{\gamma \left( s\right) -\gamma
\left( t\right) }{t-s}=\chi \left( \frac{t+s}{2}\right) \frac{\mu \left(
t-s\right) -\mu \left( 0\right) }{t-s}
\end{equation*}%
and in the limit $s\rightarrow t$ one gets%
\begin{equation*}
-\mu ^{2}\left( t\right) \frac{d\gamma }{dt}=\chi \left( t\right) \mu
^{\prime }\left( 0\right) ,
\end{equation*}%
or by (\ref{schr13b})%
\begin{equation}
\frac{d\gamma }{dt}+2a\left( 0\right) \frac{\chi \left( t\right) }{\mu
^{2}\left( t\right) }=0.  \label{at4}
\end{equation}%
But, in view of (\ref{schr9}) and (\ref{scr16}),%
\begin{equation}
\frac{d\gamma }{dt}+\exp \left( -2\dint_{0}^{t}\left( c\left( \tau \right)
-2d\left( \tau \right) \right) \ d\tau \right) \ \frac{a\left( t\right) }{%
\mu ^{2}\left( t\right) }=0.  \label{at5}
\end{equation}%
Therefore, the addition property (\ref{at3}) may hold only when%
\begin{equation}
\chi \left( t\right) =\frac{1}{2}\exp \left( -2\dint_{0}^{t}\left( c\left(
\tau \right) -2d\left( \tau \right) \right) \ d\tau \right) \ \frac{a\left(
t\right) }{a\left( 0\right) }.  \label{at6}
\end{equation}%
Some examples will be given in the next section.

\section{Examples}

Now let us consider several elementary solutions of the characteristic
equation (\ref{schr13}); more complicated cases may include special
functions, like Bessel, hypergeometric or elliptic functions \cite{An:As:Ro}%
, \cite{Ni:Uv}, \cite{Rain}, and \cite{Wa}. Among special cases of general
expressions for the Green function (\ref{schr14})--(\ref{schr19}) are the
following \cite{Cor-Sot:Lop:Sua:Sus}:

\subsection{A Free Particle}

When $a=1/2,$ $b=c=d=f=g=0,$ and $\mu ^{\prime \prime }=0,$ $\mu =t,$ one
gets%
\begin{equation}
G\left( x,y,t\right) =\frac{1}{\sqrt{2\pi it}}\ \exp \left( \frac{i\left(
x-y\right) ^{2}}{2t}\right)  \label{sp1}
\end{equation}%
as the free particle propagator \cite{Fey:Hib}. In this case $\alpha =-\beta
/2=\gamma =1/\left( 2t\right) $ and an elementary identity%
\begin{equation*}
\frac{\left( x/t-y/s\right) ^{2}}{\left( 1/t-1/s\right) }-\frac{x^{2}}{t}+%
\frac{y^{2}}{s}=-\frac{\left( x-y\right) ^{2}}{t-s}
\end{equation*}%
implies that%
\begin{equation}
G\left( x,y,t,s\right) =G\left( x,y,t-s\right)  \label{sp1a}
\end{equation}%
from the general formula (\ref{inv9b}). The time evolution operator of the
linear problem is given explicitly as the following integral operator%
\begin{equation}
U\left( t\right) \chi \left( x\right) =\frac{1}{\sqrt{2\pi it}}\
\int_{-\infty }^{\infty }e^{i\left( x-y\right) ^{2}/2t}\ \chi \left(
y\right) \ dy  \label{sp1b}
\end{equation}%
and the traditional nonlinear Schr\"{o}dinger equation 
\begin{equation}
\left( i\frac{\partial }{\partial t}+\frac{1}{2}\frac{\partial ^{2}}{%
\partial x^{2}}\right) \psi =\lambda \left\vert \psi \right\vert ^{2\nu
}\psi ,\qquad \lambda =\text{constant},\quad 0<\nu \leq 1  \label{sp1c}
\end{equation}%
has the familiar integral form%
\begin{equation}
\psi \left( x,t\right) =U\left( t\right) \psi \left( x,0\right) -i\lambda
\int_{0}^{t}U\left( t-s\right) \left\vert \psi \left( x,s\right) \right\vert
^{2\nu }\psi \left( x,s\right) \ ds.  \label{sp1d}
\end{equation}%
See \cite{Cazenave}, \cite{Fadd:Takh}, \cite{Keel:Tao}, \cite{Tao} and
references therein for more information.

\subsection{A Particle in a Uniform External Field}

For a particle in a constant external field, where $a=1/2,$ $b=c=d=g=0$ and $%
f=\ $constant, $\mu =t,$ the propagator of the linear problem is given by 
\begin{equation}
G\left( x,y,t\right) =\frac{1}{\sqrt{2\pi it}}\ \exp \left( \frac{i\left(
x-y\right) ^{2}}{2t}\right) \exp \left( \frac{if\left( x+y\right) }{2}t-%
\frac{if^{2}}{24}t^{3}\right) .  \label{sp2}
\end{equation}%
This case was studied in detail in \cite{Arrighini:Durante}, \cite%
{Brown:Zhang}, \cite{Fey:Hib}, \cite{Holstein97}, \cite{Nardone} and \cite%
{Robinett}. We have corrected a typo in \cite{Fey:Hib}; see \cite{Styer} for
a complete list of known errata in the Feynman and Hibbs book.

In this case once again $\alpha =-\beta /2=\gamma =1/\left( 2t\right) $ and,
in addition to the case of a free particle, $\delta =\varepsilon =\left(
ft\right) /2$ and $\kappa =-f^{2}t^{3}/24.$ An elementary calculation shows
from the general expression (\ref{inv9b}) that relation (\ref{sp1a}) holds.
Therefore, the corresponding nonlinear Schr\"{o}dinger equation 
\begin{equation}
\left( i\frac{\partial }{\partial t}+\frac{1}{2}\frac{\partial ^{2}}{%
\partial x^{2}}+fx\right) \psi =\lambda \left\vert \psi \right\vert ^{2\nu
}\psi ,\qquad \lambda =\text{constant},\quad 0<\nu \leq 1  \label{sp2a}
\end{equation}%
has the integral form (\ref{sp1d}), where the linear propagator is given by (%
\ref{sp2}).

In a more general case of a particle in a uniform electric field changing in
time with a similar velocity-dependent term%
\begin{equation}
\left( i\frac{\partial }{\partial t}+\frac{1}{2}\frac{\partial ^{2}}{%
\partial x^{2}}+f\left( t\right) x-ig\left( t\right) \frac{\partial }{%
\partial x}\right) \psi =\lambda \left\vert \psi \right\vert ^{2\nu }\psi ,
\label{sp2b}
\end{equation}%
where $f\left( t\right) $ and $g\left( t\right) $ are functions of time
only, the propagator of the linear problem has the form%
\begin{equation}
G\left( x,y,t\right) =\frac{1}{\sqrt{2\pi it}}\ \exp \left( \frac{i\left(
x-y\right) ^{2}}{2t}\right) \exp \left( i\left( \delta \left( t\right)
x+\varepsilon \left( t\right) y+\kappa \left( t\right) \right) \right)
\label{sp2c}
\end{equation}%
with%
\begin{eqnarray}
\delta \left( t\right) &=&\frac{1}{t}\int_{0}^{t}\left( f\left( \tau \right)
\tau +g\left( \tau \right) \right) \ d\tau ,  \label{sp2d} \\
\varepsilon \left( t\right) &=&-\delta \left( t\right) +\int_{0}^{t}f\left(
\tau \right) \ d\tau  \label{sp2e}
\end{eqnarray}%
and%
\begin{equation}
\kappa \left( t\right) =\frac{t}{2}\delta ^{2}\left( t\right)
-\int_{0}^{t}\tau \delta \left( t\right) f\left( \tau \right) \ d\tau .
\label{sp2f}
\end{equation}%
A semigroup property \cite{Cazenave}, related to (\ref{sp1a}), does not hold
anymore and one has to use a general expression (\ref{inv9b}) in order to
write the integral equation (\ref{lem6}). But, in view of an elementary
identity%
\begin{equation}
\mu \left( t\right) \mu \left( s\right) \left( \gamma \left( s\right)
-\gamma \left( t\right) \right) =\frac{1}{2}\mu \left( t-s\right) ,
\label{sp2g}
\end{equation}%
an important addition formula still holds for the amplitude of the kernel $%
G\left( x,y,t,s\right) $ in (\ref{inv8a})--(\ref{inv8}) and (\ref{inv9b}) of
the operator $U\left( t,s\right) =U\left( t\right) U^{-1}\left( s\right) .$

\subsection{The Forced Harmonic Oscillator}

The simple harmonic oscillator with $a=b=1/2,$ $c=d=f=g=0$ and $\mu ^{\prime
\prime }+\mu =0,$ $\mu =\sin t$ has the familiar propagator of the form 
\begin{equation}
G\left( x,y,t\right) =\frac{1}{\sqrt{2\pi i\sin t}}\exp \left( \dfrac{i}{%
2\sin t}\left( \left( x^{2}+y^{2}\right) \cos t-2xy\right) \right) ,
\label{sp3}
\end{equation}%
which is studied in detail at \cite{Beauregard}, \cite{Gottf:T-MY}, \cite%
{Holstein}, \cite{Maslov:Fedoriuk}, \cite{Merz}, \cite{Thomber:Taylor}.

Once again relation (\ref{sp1a}) holds and the corresponding nonlinear Schr%
\"{o}dinger equation 
\begin{equation}
i\frac{\partial \psi }{\partial t}+\frac{1}{2}\left( \frac{\partial ^{2}}{%
\partial x^{2}}-x^{2}\right) \psi =\lambda \left\vert \psi \right\vert
^{2\nu }\psi ,\qquad \lambda =\text{constant},\quad 0<\nu \leq 1
\label{sp3a}
\end{equation}%
has the integral form (\ref{sp1d}), where the propagator is given by (\ref%
{sp3}). See \cite{Cazenave}, \cite{Car}, \cite{Carl}, \cite{Carle}, \cite%
{Carles}, \cite{Carles:Miller}, \cite{Carles:Nakamura} and references
therein for investigation of solutions of this integral equation by a fixed
point argument.

A linear problem extension to the case of the forced harmonic oscillator
including an extra velocity-dependent term and a time-dependent frequency is
discussed in \cite{FeynmanPhD}, \cite{Feynman}, \cite{Fey:Hib} and \cite%
{Lop:Sus}. The nonlinear Schr\"{o}dinger equation of interest is%
\begin{equation}
i\frac{\partial \psi }{\partial t}+\frac{1}{2}\left( \frac{\partial ^{2}}{%
\partial x^{2}}-x^{2}\right) \psi +f\left( t\right) x\psi -ig\left( t\right) 
\frac{\partial \psi }{\partial x}=\lambda \left\vert \psi \right\vert ^{2\nu
}\psi  \label{sp4}
\end{equation}%
and the corresponding propagator has the form%
\begin{eqnarray}
G\left( x,y,t\right) &=&\frac{1}{\sqrt{2\pi i\sin t}}\exp \left( \dfrac{i}{%
2\sin t}\left( \left( x^{2}+y^{2}\right) \cos t-2xy\right) \right)
\label{sp6} \\
&&\times \exp \left( i\left( \delta \left( t\right) x+\varepsilon \left(
t\right) y+\kappa \left( t\right) \right) \right)  \notag
\end{eqnarray}%
with%
\begin{eqnarray}
\delta \left( t\right) &=&\frac{1}{\sin t}\int_{0}^{t}\left( f\left( \tau
\right) \sin \tau +g\left( \tau \right) \cos \tau \right) \ d\tau ,
\label{sp6a} \\
\varepsilon \left( t\right) &=&-\frac{\delta \left( t\right) }{\cos t}%
+\int_{0}^{t}\frac{\sin \tau \ \delta \left( \tau \right) }{\cos ^{2}\tau }\
d\tau +\int_{0}^{t}\frac{f\left( \tau \right) }{\cos \tau }\ d\tau
\label{sp6b}
\end{eqnarray}%
and%
\begin{eqnarray}
\kappa \left( t\right) &=&\frac{1}{2}\tan t\ \delta ^{2}\left( t\right) -%
\frac{1}{2}\int_{0}^{t}\tan ^{2}\tau \ \delta ^{2}\left( \tau \right) \ d\tau
\label{sp6c} \\
&&-\int_{0}^{t}\tan \tau \ \delta \left( \tau \right) f\left( \tau \right) \
d\tau .  \notag
\end{eqnarray}%
The addition property (\ref{at3}) holds in this case. We leave the detail to
the reader.

\subsection{A Modified Oscillator}

Furthermore, an exact solution of the $n$-dimensional time-dependent Schr%
\"{o}dinger equation for certain modified oscillator is found in \cite%
{Me:Co:Su}. In the one-dimensional case we get functions $a=\cos ^{2}t,$ $%
b=\sin ^{2}t,$ $c=2d=\sin 2t$ and our characteristic equation (\ref{schr13})
takes the form%
\begin{equation}
\mu ^{\prime \prime }+2\tan t\ \mu ^{\prime }-2\mu =0,  \label{sp5}
\end{equation}%
whose elementary solution $\mu =\cos t\sinh t+\sin t\cosh t$ satisfies the
initial conditions (\ref{schr13b}). Further, the corresponding propagator is
given by%
\begin{align}
G\left( x,y,t\right) & =\frac{1}{\sqrt{2\pi i\left( \cos t\sinh t+\sin
t\cosh t\right) }}  \label{sp7} \\
& \quad \times \exp \left( \frac{\left( x^{2}-y^{2}\right) \sin t\sinh
t+2xy-\left( x^{2}+y^{2}\right) \cos t\cosh t}{2i\left( \cos t\sinh t+\sin
t\cosh t\right) }\right) ,  \notag
\end{align}%
which was found in \cite{Me:Co:Su} as the special case $n=1$ of a general $n$%
-dimensional expansion of the Green function in hyperspherical harmonics. We
have showed that (\ref{sp7}) is a generalization of the propagator for the
simple harmonic oscillator; see Ref.~\cite{Me:Co:Su} for more details.

The corresponding nonlinear Schr\"{o}dinger equation \cite%
{Cor-Sot:Lop:Sua:Sus}%
\begin{equation}
i\frac{\partial \psi }{\partial t}+\cos ^{2}t\ \frac{\partial ^{2}\psi }{%
\partial x^{2}}-\sin ^{2}t\ x^{2}\psi +i\sin t\cos t\left( 2x\frac{\partial
\psi }{\partial x}+\psi \right) =h\left( t\right) \left\vert \psi
\right\vert ^{2\nu }\psi  \label{sp7a}
\end{equation}%
can be rewritten in an integral form. We leave the detail to the reader.

\noindent \textbf{Acknowledgments.\/} The authors are grateful to Professor
Carlos Castillo-Ch\'{a}vez for support and reference \cite{Bet:Cin:Kai:Cas}.
We thank Professors Hank Kuiper, Alex Mahalov, and Svetlana Roudenko for
valuable comments.

\end{document}